\newtheorem{corollary}{Corollary}
\newtheorem{proposition}{Proposition}
\begin{document}

\title{Directional Liquidity and Geometric Shear in Pregeometric Order Books}

\author{Jo\~ao P. da Cruz}
 \affiliation{The Quantum Computer Company, Lisbon, Portugal}
  \email{joao@quantumcomp.pt}
 \affiliation{Center for Theoretical and Computational Physics, Lisbon, Portugal}

\date{\today} 

\begin{abstract}
We introduce a structural framework for the geometry of financial order books in
which liquidity, supply, and demand are treated as emergent observables rather
than primitive market variables.
The market is modeled as a relational substrate without assumed metric,
temporal, or price coordinates.
Observable quantities arise only through observation, implemented here as a
reduction of relational degrees of freedom followed by a low-dimensional
spectral projection.
A one-dimensional projection induces a price-like coordinate and a projected
liquidity density around the mid price, from which bid and ask sides emerge as
two complementary restrictions.
We show that directional liquidity imbalances decompose naturally into a rigid
drift of the projected density and a geometric shear mode that deforms the
bid--ask structure without inducing price motion.
Under a minimal single-scale hypothesis, the shear geometry constrains the
projected liquidity to a gamma-like functional form, appearing as an
integrated-gamma profile in discrete data.
Empirical analysis of high-frequency Level~II data across multiple U.S. equities
confirms this geometry and shows that it outperforms standard alternative
cumulative models under explicit model comparison and residual diagnostics.
\end{abstract}

\keywords{order book geometry, emergent observables, pregeometric models,
spectral graph methods, liquidity asymmetry, financial markets}

\maketitle
\section{Introduction}
\label{sec:introduction}

The structure of liquidity in financial order books has been a central object
of study in market microstructure for more than two decades.
Empirical investigations of high-frequency data have revealed remarkably
robust regularities in the shape of order books, including convex liquidity
profiles near the mid price, heavy tails at larger price distances, and
persistent bid--ask asymmetries
\cite{Bouchaud2002,Bouchaud2004,Bouchaud2009,SmithFarmerGillemotKrishnamurthy2003}.
These features appear across assets, venues, and market regimes, suggesting
that they reflect structural constraints rather than idiosyncratic trading
strategies.

Most existing approaches explain these regularities by modeling explicit
microstructural mechanisms.
Agent-based and stochastic models describe order placement, cancellation, and
execution as driven by heterogeneous trader behavior, inventory control, or
strategic optimization
\cite{ZovkoFarmer2002,FarmerLillo2004,MikeFarmer2008,ContStoikovTalreja2010}.
Within this paradigm, bid and ask curves are treated as independent objects,
interpreted as opposing forces of supply and demand whose interaction produces
price dynamics and liquidity fluctuations.

While successful in reproducing many stylized facts, such models typically
require detailed behavioral assumptions and asset-specific calibration.
Moreover, they take price and time as fundamental variables, embedding the
order book in a pre-existing metric and temporal structure.
As a result, it remains unclear whether the observed geometric regularities of
liquidity are consequences of specific trading rules, or whether they reflect
more general constraints imposed by observation itself.

In this work, we adopt a complementary and deliberately minimal perspective.
Rather than postulating price, time, or supply and demand as primitive market
variables, we treat them as \emph{emergent observables}.
The market is modeled as a relational substrate whose microscopic description
contains no metric, no temporal coordinate, and no notion of price.
Observable quantities arise only through projection, implemented here as a
reduction of relational degrees of freedom followed by a low-dimensional
spectral embedding.
This perspective is inspired by pregeometric approaches in statistical physics
and quantum gravity, where geometry and dynamics are not fundamental but emerge
from relational structures under coarse-graining
\cite{Rovelli2004,Oriti2014,AmbjornJurkevichLoll2005}.

Within this framework, the order book is not viewed as a collection of
independent bid and ask curves.
Instead, liquidity is represented as a single projected density defined along
an emergent price-like coordinate.
The bid and ask sides arise only after an observational cut at the mid price,
as two complementary restrictions of the same underlying object.
Bid--ask asymmetry is therefore not attributed to distinct economic forces,
but to geometric deformation of the projected density.

A central theme of the present paper is that directional liquidity imbalance
can be decomposed into two qualitatively different modes.
One corresponds to a rigid translation of the projected density, associated
with mid-price drift.
The other corresponds to a relative deformation between the two sides of the
mid, which we term \emph{geometric shear}.
These modes are conceptually distinct: shear modifies the shape of the
bid--ask structure without inducing price motion, while drift moves the mid
without changing local geometry.
This separation provides a natural explanation for empirical observations
showing that order-book imbalance need not be directly predictive of price
changes \cite{FarmerLillo2004,Bouchaud2009}.

Under minimal regularity assumptions and a single-scale hypothesis excluding
intrinsic length scales beyond distance to the mid and finite visibility, we
show that the projected liquidity density is constrained to a gamma-like
functional form.
In empirical data, where discretization and sparsity obscure the differential
profile, this prediction manifests as an integrated-gamma geometry of
cumulative liquidity.
We validate these predictions using high-frequency Level~II data for several
U.S. equities and show that the proposed geometry outperforms standard
alternative models under explicit model comparison and residual analysis.

The paper is organized as follows.
Section~\ref{sec:obs_geometry} introduces the observational construction,
showing how liquidity, the mid price, and bid--ask structure arise from
projection of a relational substrate.
Section~\ref{sec:shear} develops the decomposition of liquidity dynamics into drift and
shear modes.
Section~\ref{sec:single_scale_shear} derives the single-scale shear constraint and the resulting
gamma-like geometry.
Section~\ref{sec:empirical_shear} presents empirical validation using real order-book data, including
model comparison and residual diagnostics.
We conclude with a discussion of implications for market microstructure and
directions for future work.
\section{Observational geometry and two-sided restriction}
\label{sec:obs_geometry}

The central premise of this work is that order-book observables do not exist at
the level of the underlying market substrate, but arise only through
observation.
In particular, price, liquidity, and bid--ask structure are not assumed as
primitive variables.
They emerge as geometric objects induced by a reduction and projection of an
underlying relational system.

Figure~\ref{fig:obs_geometry_vertical} provides a schematic overview of this
construction and will guide the discussion throughout this section.

\begin{figure}[t]
\centering
\begin{tikzpicture}[
    font=\small,
    panel/.style={draw, rounded corners=2pt, minimum width=0.75\columnwidth, minimum height=3.2cm, inner sep=6pt},
    lab/.style={font=\small\bfseries},
    arr/.style={-{Latex[length=3mm]}, line width=1.0pt},
    midline/.style={densely dashed, line width=0.6pt},
    axis/.style={line width=0.8pt},
    stem/.style={line width=0.8pt},
    dot/.style={circle, fill=black, inner sep=1.0pt},
]

\coordinate (cA) at (0,  0.0);
\coordinate (cB) at (0, -4.2);
\coordinate (cC) at (0, -8.4);
\coordinate (cD) at (0,-12.6);

\node[panel] (A) at (cA) {};
\node[lab, anchor=north west] at (A.north west) {(a) Relational substrate};

\coordinate (v2)  at ([xshift=-18mm,yshift= 4mm]A.center);
\coordinate (v3)  at ([xshift=-10mm,yshift= 6mm]A.center);
\coordinate (v4)  at ([xshift= -2mm,yshift= 3mm]A.center);
\coordinate (v5)  at ([xshift=  6mm,yshift= 8mm]A.center);
\coordinate (v6)  at ([xshift= 14mm,yshift= 5mm]A.center);
\coordinate (v7)  at ([xshift= 22mm,yshift= 7mm]A.center);

\coordinate (v9)  at ([xshift=-12mm,yshift=-5mm]A.center);
\coordinate (v10) at ([xshift= -4mm,yshift=-2mm]A.center);
\coordinate (v11) at ([xshift=  4mm,yshift=-4mm]A.center);
\coordinate (v12) at ([xshift= 12mm,yshift=-3mm]A.center);
\coordinate (v13) at ([xshift= 20mm,yshift=-5mm]A.center);

\coordinate (v14) at ([xshift=-10mm,yshift=-10mm]A.center);
\coordinate (v15) at ([xshift=  0mm,yshift=-11mm]A.center);
\coordinate (v16) at ([xshift= 10mm,yshift=-10mm]A.center);

\draw[line width=0.6pt, gray!70] (v3)--(v2);
\draw[line width=0.6pt, gray!70] (v3)--(v4);
\draw[line width=0.6pt, gray!70] (v3)--(v5);
\draw[line width=0.6pt, gray!70] (v3)--(v9);
\draw[line width=0.6pt, gray!70] (v3)--(v10);

\draw[line width=0.6pt, gray!60] (v10)--(v9);
\draw[line width=0.6pt, gray!60] (v10)--(v11);
\draw[line width=0.6pt, gray!60] (v11)--(v12);

\draw[line width=0.6pt, gray!60] (v9)--(v14);
\draw[line width=0.6pt, gray!60] (v10)--(v15);
\draw[line width=0.6pt, gray!60] (v11)--(v16);

\draw[line width=0.6pt, gray!50] (v5)--(v6);
\draw[line width=0.6pt, gray!50] (v6)--(v7);
\draw[line width=0.6pt, gray!50] (v12)--(v13);

\foreach \v/\s in {
  v2/1.1, v3/1.6, v4/1.1, v5/1.4, v6/1.1, v7/1.3,
  v9/1.0, v10/1.3, v11/1.0, v12/1.4, v13/1.0,
  v14/1.1, v15/1.3, v16/1.1
}{
  \node[circle, fill=black, inner sep=\s pt] at (\v) {};
}

\node[font=\scriptsize, anchor=south west]
  at ([xshift=2mm,yshift=1mm]A.south west)
  {vertices $i$, weights $w_i$, edges encode adjacency};

\draw[arr] (A.south) -- ++(0,-8mm)
  node[midway, right=2mm, font=\scriptsize] {projection};

\node[panel] (B) at (cB) {};
\node[lab, anchor=north west] at (B.north west) {(b) Observational projection};

\draw[midline] ([yshift=1.2cm]B.center) -- ([yshift=-1.2cm]B.center);

\foreach \y/\s in {1.1/1.3, 0.8/1.0, 0.5/1.2, 0.2/1.0, 0.0/1.6, -0.3/1.1, -0.7/1.3, -1.1/1.0}{
  \node[circle, fill=black, inner sep=\s pt] at ([yshift=\y cm]B.center) {};
}

\node[font=\scriptsize, anchor=south]
  at ([yshift=0.5mm]B.south)
  {$p = \Pi(G)$};

\draw[arr] (B.south) -- ++(0,-8mm)
  node[midway, right=2mm, font=\scriptsize] {pushforward};

\node[panel] (C) at (cC) {};
\node[lab, anchor=north west] at (C.north west) {(c) Observable liquidity};

\draw[axis] ([xshift=-3.0cm,yshift=-1.0cm]C.center) -- ([xshift=3.0cm,yshift=-1.0cm]C.center);
\node[font=\scriptsize, anchor=north] at ([yshift=-1.2cm]C.center) {price coordinate $p$};

\draw[midline] ([yshift=1.1cm]C.center) -- ([yshift=-1.1cm]C.center);
\node[font=\scriptsize, anchor=west] at ([xshift=0.05cm,yshift=1.0cm]C.center) {$p^\star$};

\foreach \x/\h in {-2.5/0.4, -1.8/0.6, -1.2/0.9, -0.6/0.7, -0.2/0.5}{
  \pgfmathsetmacro{\yh}{-1.0+\h}
  \draw[stem]
    ([xshift=\x cm,yshift=-1.0cm]C.center)
    -- ([xshift=\x cm,yshift=\yh cm]C.center);
  \node[dot] at
    ([xshift=\x cm,yshift=\yh cm]C.center) {};
}
\foreach \x/\h in {0.2/0.5, 0.6/0.8, 1.2/1.0, 1.8/0.7, 2.5/0.4}{
  \pgfmathsetmacro{\yh}{-1.0+\h}
  \draw[stem]
    ([xshift=\x cm,yshift=-1.0cm]C.center)
    -- ([xshift=\x cm,yshift=\yh cm]C.center);
  \node[dot] at
    ([xshift=\x cm,yshift=\yh cm]C.center) {};
}
\node[font=\scriptsize, anchor=south]
  at ([yshift=2mm]C.south)
  {$\nu_G = p_\# \mu_G$};

\draw[arr] (C.south) -- ++(0,-8mm)
  node[midway, right=2mm, font=\scriptsize] {restriction};

\node[panel] (D) at (cD) {};
\node[lab, anchor=north west] at (D.north west) {(d) Two-sided liquidity};

\draw[axis] ([xshift=-3.0cm,yshift=-1.0cm]D.center) -- ([xshift=3.0cm,yshift=-1.0cm]D.center);
\node[font=\scriptsize, anchor=north] at ([yshift=-1.2cm]D.center) {signed distance};

\draw[midline] ([yshift=1.1cm]D.center) -- ([yshift=-1.1cm]D.center);
\node[font=\scriptsize, anchor=west] at ([xshift=0.05cm,yshift=1.0cm]D.center) {$0$};

\foreach \x/\h in {-2.4/0.5, -1.7/0.7, -1.1/1.0, -0.6/0.8}{
  \pgfmathsetmacro{\yh}{-1.0+\h}
  \draw[stem]
    ([xshift=\x cm,yshift=-1.0cm]D.center)
    -- ([xshift=\x cm,yshift=\yh cm]D.center);
  \node[dot] at
    ([xshift=\x cm,yshift=\yh cm]D.center) {};
}

\foreach \x/\h in {0.6/0.8, 1.1/1.0, 1.7/0.7, 2.4/0.5}{
  \pgfmathsetmacro{\yh}{-1.0+\h}
  \draw[stem]
    ([xshift=\x cm,yshift=-1.0cm]D.center)
    -- ([xshift=\x cm,yshift=\yh cm]D.center);
  \node[dot] at
    ([xshift=\x cm,yshift=\yh cm]D.center) {};
}

\node[font=\scriptsize, anchor=north] at ([xshift=-1.5cm,yshift=1.0cm]D.center) {Bid};
\node[font=\scriptsize, anchor=north] at ([xshift= 1.5cm,yshift=1.0cm]D.center) {Ask};

\end{tikzpicture}

\caption{\textbf{Observational origin of bid--ask structure.}
A relational substrate without geometry is observed through a projection
$p=\Pi(G)$.
The observable liquidity distribution arises as the pushforward
$\nu_G=p_\#\mu_G$.
Bid and ask curves correspond to the restriction of this single density to
either side of the mid price.}
\label{fig:obs_geometry_vertical}
\end{figure}
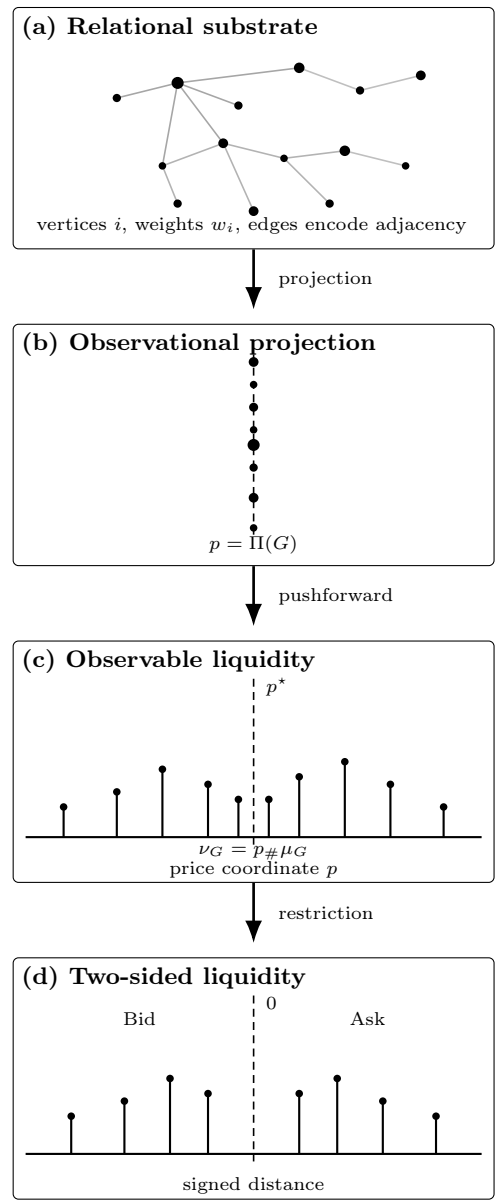

\subsection{Relational substrate and observational weights}

We begin with a market substrate represented abstractly as a relational system
$G=(V,E)$, where vertices label economic entities or interaction sites and edges
encode the possibility of interaction.
No metric, temporal ordering, or price coordinate is defined at this level.

Each vertex $i\in V$ carries a nonnegative observational weight $w_i\ge 0$,
representing visible size or intensity.
These weights define a purely atomic measure on the substrate,
\begin{equation}
\mu_G := \sum_{i\in V} w_i\,\delta_i ,
\end{equation}
where $\delta_i$ denotes the unit point mass at vertex $i$.
This measure contains no geometric information; it records only relational
support and magnitude.

Importantly, the weights $w_i$ should already be understood as
\emph{observational} quantities.
They arise after marginalization of underlying relational activity and do not
correspond to microscopic degrees of freedom.

Panel~(a) of Fig.~\ref{fig:obs_geometry_vertical} illustrates this level: a
weighted but non-geometric relational substrate.

\subsection{Edge-level degrees of freedom and observational reduction}
\label{sec:edge_to_node_reduction}

In the present framework, the microscopic degrees of freedom are relational:
they live on \emph{edges} rather than on vertices.
Operationally, an order-book snapshot consists of interactions across venues,
levels, and counterparties, which are naturally represented as edge-level
contributions to liquidity.

The observer does not access these edge degrees of freedom directly.
Instead, observation produces a reduced description by compressing
edge-level information into vertex-level observables.

Formally, let $\mathbb{R}^{E}$ denote an edge space carrying relational activity
variables $u\in\mathbb{R}^{E}$.
A generic reduction to vertex space $\mathbb{R}^{V}$ can be written schematically
as
\begin{equation}
\label{eq:edge_to_node_operator}
x = B\,u ,
\end{equation}
where $B\in\mathbb{R}^{|V|\times|E|}$ is an incidence-type aggregation operator
(e.g., oriented or weighted incidence).
Equation~\eqref{eq:edge_to_node_operator} is not meant as a microstructural
model; it encodes the generic elimination of unobserved relational degrees of
freedom.

This reduction step captures a central pregeometric idea: observable quantities
arise only after compressing the underlying relational structure.
Vertices serve as observable anchors only after edge-level information has been
integrated out.

\subsection{Observational projection}

Observable coordinates arise only through a further projection of the reduced
vertex signal.
We denote by
\begin{equation}
p = \Pi(G)
\end{equation}
an observational projection assigning to each vertex $i$ a real-valued
coordinate $p_i\in\mathbb{R}$.

In practice, such projections may arise, for example, from
low-dimensional spectral representations of graph operators
constructed from relational structure.
However, the arguments below rely only on the existence of a
one-dimensional observable coordinate.

Crucially, the projection induces an \emph{ordering} but not an intrinsic
metric.
Distances are meaningful only relative to the projection and only within a
finite observational window.
Panel~(b) of Fig.~\ref{fig:obs_geometry_vertical} depicts this step: the
relational substrate collapses onto a one-dimensional observable axis.

\subsection{Pushforward liquidity measure}

The observable liquidity distribution is obtained by transporting relational
weights through the projection.
Formally, the projected liquidity measure is defined as the pushforward of
$\mu_G$ under $p$,
\begin{equation}
\nu_G := p_\# \mu_G .
\end{equation}

Operationally, this means that all weights carried by vertices mapping to the
same projected coordinate are aggregated.
In empirical data, $\nu_G$ is accessed only through discretization and finite
windows, yielding a binned liquidity density along the projected coordinate.

At this stage, there is still only \emph{one} observable liquidity object.
No distinction between bid and ask has yet been introduced.
Panel~(c) of Fig.~\ref{fig:obs_geometry_vertical} illustrates this step.

\subsection{Mid price as an observational cut}

The bid--ask structure arises through a further observational operation: a
distinguished cut of the projected axis.
We define the mid price $p^\star$ as the point that balances projected mass on
either side,
\begin{equation}
p^\star :=
\arg\min_p
\left|
\int_{-\infty}^{p} \nu_G(du)
-
\int_{p}^{\infty} \nu_G(du)
\right|.
\end{equation}

This definition is purely geometric.
It involves no notion of transaction, valuation, equilibrium, or market
clearing.
The mid price is the natural symmetry point of the projected measure within the
observational window.

\subsection{Two-sided restriction}

Once the mid price is defined, the observable liquidity splits into two
restricted branches,
\begin{align*}
Q_{\mathrm{bid}}(x) &:= \nu_G(p^\star - x), \\
Q_{\mathrm{ask}}(x) &:= \nu_G(p^\star + x), \\ x&>0 .
\end{align*}

These are not independent curves.
They are complementary restrictions of the \emph{same} projected density.
Panel~(d) of Fig.~\ref{fig:obs_geometry_vertical} illustrates this final step.

This construction makes clear that bid--ask asymmetry does not require
independent forces of supply and demand.
Asymmetry arises whenever the projected density is locally skewed around the
mid, a geometric effect that we later interpret as a directional shear induced
by inflationary relational dynamics.

In the following sections, we analyze the structural consequences of this
two-sided restriction and show that, under minimal regularity assumptions, it
leads uniquely to the integrated-gamma liquidity geometry observed in real
order books.

\subsection{Operational meaning of the projection $\Pi(G)$}

An important clarification concerns the role of the projection
$\Pi(G)$ introduced in this section.
In the abstract framework, $\Pi(G)$ denotes an observational operator
that reduces an underlying relational substrate to a one-dimensional
coordinate accessible to the observer.
This projection is not assumed to be unique, nor is its explicit
construction required for the results derived in this work.

In the empirical analysis, the role of $\Pi(G)$ is played directly by
the observed price axis provided by the Level~II order book data.
That is, the discretized price coordinate (in tick units relative to
the mid) constitutes the operational realization of the projection.
Liquidity is observed only after this reduction, through aggregation
of visible depth at each projected price level.

Accordingly, no explicit spectral embedding or graph Laplacian is
constructed from the data; the observable coordinate is taken directly
from the market feed.
The pregeometric framework should therefore be understood as a
conceptual organization of the observational procedure, rather than as
an additional empirical hypothesis.
Its purpose is to clarify which properties of order-book observables
are structural consequences of projection and restriction, and which
depend on market-specific microstructural details.

This distinction between operational observables and abstract
observational structure aligns the present approach with statistical
physics treatments of coarse-graining and projection, rather than with
microscopic market modeling.

From this perspective, the results reported here apply to any
one-dimensional observable coordinate that orders liquidity and admits
a natural symmetry point.
The price axis is a privileged example, but not the only possible
realization of the projection $\Pi(G)$.  
\section{Shear decomposition and gauge separation}
\label{sec:shear}

The observational construction introduced in
Sec.~\ref{sec:obs_geometry} defines a single projected liquidity density
$\nu_t(x)$ at each observation time $t$.
This object already incorporates aggregation across venues, reduction of
relational degrees of freedom, and projection onto an observable coordinate.
In this section, we analyze the \emph{allowed dynamics} of $\nu_t(x)$ and show
that its evolution admits a natural and unavoidable decomposition into a
purely gauge component and a physical deformation mode, which we term
\emph{shear}.

\subsection{Observable dynamics of projected liquidity}

Let $\nu_t(x)$ denote the observable liquidity density obtained as the
pushforward of the relational measure at time $t$.
We make no assumption on stationarity, equilibrium, or functional form.
The only requirement is that $\nu_t$ be a locally integrable, nonnegative
measure supported within a finite observational window around the mid.

Empirically, $\nu_t(x)$ evolves in time due to updates of the underlying
relational substrate.
These updates induce both apparent shifts of the book and changes in its
local shape.
Our goal is to disentangle these two effects at the level of observables.

\subsection{Kinematic decomposition of liquidity evolution}

We begin with a purely kinematic result that does not depend on the specific
dynamics of the relational substrate.

\begin{proposition}[Shear--drift decomposition]
\label{prop:shear_drift}
Let $\nu_t(x)$ be a family of observable liquidity densities indexed by time
$t$.
Then, for each $t$, there exists a decomposition
\begin{equation}
\label{eq:shear_drift}
\nu_t(x) = \tilde{\nu}_t(x - m_t),
\end{equation}
where $m_t \in \mathbb{R}$ is a scalar shift and $\tilde{\nu}_t$ is a
density with vanishing first moment around the origin,
\[
\int x\,\tilde{\nu}_t(dx) = 0 ,
\]
provided the moment exists.
\end{proposition}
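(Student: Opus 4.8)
The plan is to recognize the shift $m_t$ as the barycenter of the projected density and to obtain $\tilde\nu_t$ by recentering $\nu_t$ so that its mean sits at the origin; the result is then essentially definitional. First I would introduce the total projected mass $M_t := \int \nu_t(dx)$, which is finite and---for a nontrivial density---strictly positive, because $\nu_t$ is a nonnegative measure supported in a finite observational window about the mid. Since finite support also forces the first moment to exist, the stated proviso ``provided the moment exists'' is automatically satisfied in the regime of interest, and I may set
\[
m_t := \frac{1}{M_t}\int x\,\nu_t(dx).
\]
This $m_t$ is precisely the rigid-drift component anticipated in the introduction.

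Next I would define $\tilde\nu_t$ as the translate of $\nu_t$ that moves this barycenter to the origin, namely the pushforward $\tilde\nu_t := (T_{-m_t})_\#\nu_t$ under $T_{-m_t}\colon x\mapsto x-m_t$, equivalently $\tilde\nu_t(x)=\nu_t(x+m_t)$ in density notation. The claimed decomposition $\nu_t(x)=\tilde\nu_t(x-m_t)$ then holds by direct substitution, since $\tilde\nu_t(x-m_t)=\nu_t\bigl((x-m_t)+m_t\bigr)=\nu_t(x)$; in measure language this is just $\nu_t=(T_{m_t})_\#\tilde\nu_t$.

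The only actual computation is the verification of the centering constraint, which I would carry out with a single application of the transfer formula for pushforwards (equivalently the change of variables $y=x+m_t$):
\[
\int x\,\tilde\nu_t(dx)=\int (x-m_t)\,\nu_t(dx)=\int x\,\nu_t(dx)-m_t M_t=m_tM_t-m_tM_t=0,
\]
where the final equality uses the definition of $m_t$. This establishes existence of the decomposition.

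I expect no genuine obstacle, since the statement is purely kinematic and reduces to the elementary fact that any finite measure with a finite mean can be recentered. The two points requiring a clean hand are (i) the division by $M_t$, legitimate precisely because nonnegativity and finite support give $0<M_t<\infty$ for any nontrivial $\nu_t$, and (ii) keeping the density and measure notations consistent across the pushforward. I would close by remarking that the decomposition is in fact \emph{unique} under the vanishing-first-moment constraint---taking the first moment of $\nu_t(x)=\tilde\nu_t(x-m_t)$ forces any admissible shift to equal the barycenter---so although the proposition asserts only existence, the drift $m_t$ and the centered shape $\tilde\nu_t$ are unambiguously determined, which is what makes the subsequent separation of drift from shear well posed.
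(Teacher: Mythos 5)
Your proposal is correct and follows essentially the same route as the paper's proof: define $m_t$ as the barycenter $\int x\,\nu_t(dx)/\int \nu_t(dx)$ and take $\tilde{\nu}_t(y) := \nu_t(y+m_t)$. Your version merely fills in details the paper leaves implicit (positivity of the total mass, the change-of-variables check that the recentered first moment vanishes, and the uniqueness remark), all of which are sound.
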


\begin{proof}
For any integrable density $\nu_t(x)$ with finite first moment, define
\[
m_t := \frac{\int x\,\nu_t(dx)}{\int \nu_t(dx)} .
\]
Setting $\tilde{\nu}_t(y) := \nu_t(y + m_t)$ yields the stated decomposition.
This construction is unique up to the choice of centering convention and does
not rely on any dynamical assumption.
\end{proof}

Proposition~\ref{prop:shear_drift} shows that any observable evolution of the
order book can be written as the superposition of a rigid translation of the
entire density and a residual deformation.
This decomposition is purely geometric and holds independently of market
microstructure, agent behavior, or equilibrium concepts.

\subsection{Gauge interpretation of mid-price motion}

Within the observational framework, the scalar shift $m_t$ corresponds to a
choice of reference point along the projected coordinate.
Operationally, it coincides with the time-dependent mid price defined in
Sec.~\ref{sec:obs_geometry}.

Crucially, this shift carries no intrinsic geometric content.
Redefining the origin of the projected axis by $x \mapsto x - m_t$ leaves all
relative distances unchanged and does not alter the internal structure of the
liquidity distribution.
We therefore interpret $m_t$ as a \emph{gauge degree of freedom}.

Fixing the gauge amounts to choosing a reference frame in which the mid price
is held fixed.
All physically meaningful information about the shape of the book is then
contained in the residual density $\tilde{\nu}_t$.

\subsection{Definition of shear as physical deformation}

We define the \emph{shear density} as the gauge-fixed observable
\begin{equation}
\label{eq:shear_density}
\tilde{\rho}_t(x) := \tilde{\nu}_t(x),
\end{equation}
i.e., the projected liquidity density expressed in a frame where the mid price
has been removed.

The shear density $\tilde{\rho}_t(x)$ encodes all nontrivial geometric features
of the order book:
asymmetries between bid and ask sides, curvature near the mid, and decay at
large distances.
By construction, it is invariant under translations of the projected axis and
therefore represents a genuine observable.

Bid--ask asymmetry arises naturally at this stage.
Restricting $\tilde{\rho}_t(x)$ to $x<0$ and $x>0$ yields the two visible
branches of the order book.
These branches are not independent objects but complementary restrictions of a
single shear field.

\subsection{Physical content of shear dynamics}

The decomposition
Eq.~\eqref{eq:shear_drift} separates observable order-book motion into:
\begin{itemize}
\item a gauge component ($m_t$), corresponding to rigid translation of the
projected coordinate, and
\item a physical component ($\tilde{\rho}_t$), corresponding to deformation of
the projected density.
\end{itemize}

In this interpretation, time-varying liquidity imbalances and bid--ask
asymmetries are not driven by independent supply and demand forces.
They arise as \emph{shear modes} of a single projected liquidity geometry under
inflationary relational dynamics.

Once mid-price motion is identified as a gauge degree of freedom and removed,
the observable dynamics of the order book reduce to the evolution of a single
object: the shear density $\tilde{\rho}_t(x)$.
All bid--ask asymmetries, curvature changes, and local imbalances reside in
this quantity.

The question addressed in the next section is therefore not empirical but
structural.
Given a gauge-fixed shear profile observed through projection, what functional
forms are admissible if no additional length scales are introduced?
We show that this requirement alone uniquely fixes the geometry of the book.

In the following section, we classify the admissible forms of
$\tilde{\rho}_t(x)$ under minimal structural assumptions.
We show that imposing a single-scale constraint on the shear uniquely selects
the gamma family observed empirically.

\section{Single-scale shear and gamma geometry}
\label{sec:single_scale_shear}

In Sec.~\ref{sec:shear} we identified the shear density
$\tilde{\rho}_t(x)$ as the gauge-invariant observable encoding the physical
geometry of the order book.
We now show that imposing a minimal structural constraint on the shear—
namely the absence of intrinsic length scales beyond distance from the mid and
finite visibility—uniquely determines its functional form.

\subsection{Structural assumptions on shear}

The shear density $\tilde{\rho}_t(x)$ is defined on the real line and represents
the projected liquidity profile in a mid-centered frame.
We impose the following minimal and empirically motivated assumptions, to be
understood as holding locally in time.

\begin{enumerate}
\item[(S1)] \textbf{Positivity and regularity.}
$\tilde{\rho}_t(x)$ is nonnegative and locally differentiable for $x \neq 0$.

\item[(S2)] \textbf{Vanishing at the mid.}
Liquidity vanishes at the mid price,
\[
\tilde{\rho}_t(0)=0 ,
\]
reflecting the absence of resting orders exactly at the transaction price.

\item[(S3)] \textbf{Finite visibility.}
Liquidity decays at large distance,
\[
\lim_{|x|\to\infty} \tilde{\rho}_t(x) = 0 ,
\]
due to finite observation windows and limited participation.

\item[(S4)] \textbf{Single-scale shear.}
Within the observational window, the shear introduces no intrinsic length
scale beyond the distance $|x|$ to the mid and a global decay scale.
\end{enumerate}

Assumption (S4) is the key structural input.
It formalizes the notion that the deformation of the book is governed by a
single geometric mode rather than by multiple competing microstructural scales.

\subsection{Log-slope characterization of shear}

We restrict attention to one side of the book, say $x>0$.
Define the one-sided shear profile
\[
q(x) := \tilde{\rho}_t(x), \qquad x>0 .
\]

The absence of additional scales implies that the logarithmic derivative of
$q(x)$ can depend only on $x^{-1}$ and a constant decay rate.

\begin{proposition}[Single-scale shear log-slope]
\label{prop:single_scale_logslope}
Under assumptions (S1)--(S4), the logarithmic derivative of the shear satisfies
\begin{equation}
\label{eq:logslope}
\frac{d}{dx}\log q(x) = \frac{\gamma}{x} - \lambda ,
\qquad x>0 ,
\end{equation}
for some $\gamma \ge 0$ and $\lambda \ge 0$.
\end{proposition}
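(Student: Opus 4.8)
The plan is to reduce the whole statement to a constraint on the log-slope $L(x) := \frac{d}{dx}\log q(x) = q'(x)/q(x)$, which is well defined for $x>0$ by (S1) on the region where $q$ is strictly positive; positivity of $q$ in the interior is consistent with (S2) and (S3), since $q$ is nonnegative, differentiable, and not identically zero while vanishing at both ends. The central remark, taken directly from the text preceding the proposition, is that $L$ carries the dimension of an inverse length. Under (S4) it can therefore be assembled only from the two admissible ingredients: the local distance $x$ to the mid and the single global decay rate $\lambda$, which is itself an inverse length. Extracting this as the precise operational meaning of (S4) is what makes the argument more than dimensional bookkeeping.

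First I would make the dimensional constraint explicit. Since $x$ is the only available length and $\lambda^{-1}$ the only available scale, a Buckingham-type argument forces $L(x) = \frac{1}{x}\,\Phi(\lambda x)$ for a dimensionless function $\Phi$ of the single dimensionless combination $\lambda x$. At this stage $L$ is still pinned down only up to the free function $\Phi$, so the near- and far-field behaviour must be used to fix it.

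Next I would invoke (S4) in its sharp form: a single shear mode means $\Phi$ is affine, $\Phi(\xi)=\gamma-\xi$, with the constant term $\gamma$ governing the scale-free regime near the mid and the linear coefficient fixed by the global decay. Equivalently, I would argue the two regimes separately. As $x\to 0$, assumption (S2) together with the absence of any local scale forces power-law behaviour $q(x)\sim x^{\gamma}$, whose log-slope is exactly $\gamma/x$; as $x\to\infty$, assumption (S3) forces decay controlled solely by $\lambda$, contributing the constant $-\lambda$. Adding the two admissible terms and excluding any further scale-dependent structure by (S4) yields $L(x)=\gamma/x-\lambda$. The nonnegativity then follows: one needs $\gamma\ge 0$ for $q(x)\to 0$ as $x\to 0^{+}$ in accordance with (S2), and $\lambda\ge 0$ so that $\int^{x}L\to-\infty$ and hence $q\to 0$ as demanded by (S3). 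The boundary cases $\gamma=0$ and $\lambda=0$ correspond to degenerate pure-decay and pure-power profiles and mark the extreme admissible points.

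The main obstacle is the step that excludes any nonlinear dependence of $\Phi$ on $\lambda x$, that is, turning the informal single-scale hypothesis (S4) into the rigid affine form $\Phi(\xi)=\gamma-\xi$. Dimensional analysis alone leaves $\Phi$ arbitrary and does not by itself rule out, for instance, $\Phi(\xi)=\gamma-\xi^{2}$ or $\Phi(\xi)=\gamma\,e^{-\xi}$, so the content of the proposition rests entirely on reading \emph{single scale} as \emph{at most one additive inverse-length beyond $1/x$}. I would therefore state this reading as the precise meaning of (S4) and verify its consistency with (S2) and (S3), rather than attempt to derive the affine form from weaker primitives.
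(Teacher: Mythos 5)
Your argument is essentially the paper's own proof: the paper likewise observes that $d(\log q)/dx$ must transform as an inverse length, takes $1/x$ as the only local length available under (S4), adds a constant negative term to enforce the decay required by (S3), and then asserts that ``no other functional dependence is allowed without introducing an intrinsic scale.'' The obstacle you flag is precisely the leap hidden in that final assertion --- dimensional analysis alone permits $L(x)=\Phi(\lambda x)/x$ for an arbitrary dimensionless $\Phi$ (a Gaussian tail $e^{-(\lambda x)^{2}}$ involves the same single scale $\lambda^{-1}$ yet a different functional form), so the paper is implicitly stipulating the affine reading of (S4) just as you do; your version, which states that stipulation as the operational meaning of (S4) rather than pretending to derive it, is the more defensible presentation of the same argument.
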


\begin{proof}
By dimensional consistency, any admissible expression for
$d(\log q)/dx$ must transform as an inverse length.
Under (S4), the only available local length is $x$ itself, yielding a term
proportional to $1/x$.
Finite visibility (S3) requires an additional constant negative contribution to
ensure decay at large $x$.
No other functional dependence is allowed without introducing an intrinsic
scale, contradicting (S4).
\end{proof}

Equation~\eqref{eq:logslope} expresses the shear as a balance between a local
geometric divergence near the mid and a global damping at large distances.

\subsection{Gamma classification of shear profiles}

We now show that the single-scale shear condition uniquely fixes the functional
form of $q(x)$.

\begin{proposition}[Gamma geometry of single-scale shear]
\label{prop:gamma_shear}
Let $q(x)$ be a positive, differentiable function on $(0,\infty)$ satisfying
\eqref{eq:logslope}.
Then
\begin{equation}
\label{eq:gamma_form}
q(x) = C\, x^{\gamma} e^{-\lambda x},
\qquad x>0 ,
\end{equation}
for some constant $C>0$.
\end{proposition}

\begin{proof}
Integrating \eqref{eq:logslope} yields
\[
\log q(x) = \gamma \log x - \lambda x + \log C ,
\]
where $C>0$ is an integration constant.
Exponentiating gives the stated form.
\end{proof}

Propositions~\ref{prop:single_scale_logslope} and
\ref{prop:gamma_shear} together establish that gamma geometry is not an
empirical ansatz but the unique consequence of single-scale shear under
projection.

\subsection{Interpretation of the exponent $\gamma$}

The exponent $\gamma$ has a direct geometric interpretation.
It controls the local curvature of the shear near the mid,
\[
\frac{d^2}{dx^2} \log q(x) \sim -\frac{\gamma}{x^2} \quad (x \to 0^+),
\]
and therefore quantifies how rapidly liquidity builds away from the mid in the
gauge-fixed frame.

Large values of $\gamma$ correspond to sharply curved shear profiles, while
small $\gamma$ indicates flatter, more weakly structured books.
Temporal variation of $\gamma$ reflects changes in the instantaneous shear
geometry induced by inflationary rearrangements of the relational substrate.

\subsection{Integrated shear and cumulative observables}

In empirical data, direct access to $q(x)$ is limited by discreteness and
sparsity.
A more stable observable is the cumulative shear,
\[
S(x) := \int_0^x q(u)\,du .
\]

\begin{corollary}[Integrated-gamma geometry]
\label{cor:int_gamma}
If $q(x)$ has the form \eqref{eq:gamma_form}, then
\[
S(x) = \frac{C}{\lambda^{\gamma+1}}
\,\gamma\!\left(\gamma+1,\lambda x\right),
\]
where $\gamma(a,z)$ denotes the lower incomplete gamma function.
\end{corollary}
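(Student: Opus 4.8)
The plan is to substitute the gamma form \eqref{eq:gamma_form} directly into the definition of $S(x)$ and reduce the resulting integral to the standard lower incomplete gamma function. First I would write
\[
S(x) = \int_0^x C\,u^{\gamma} e^{-\lambda u}\,du ,
\]
and then perform the change of variables $z = \lambda u$, so that $u = z/\lambda$ and $du = dz/\lambda$. This transforms the integral into
\[
S(x) = \frac{C}{\lambda^{\gamma+1}} \int_0^{\lambda x} z^{\gamma} e^{-z}\,dz .
\]

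Next I would recall the defining integral representation of the lower incomplete gamma function,
\[
\gamma(a,z) := \int_0^{z} t^{a-1} e^{-t}\,dt ,
\]
and identify the exponent in the integrand: since $z^{\gamma} = z^{(\gamma+1)-1}$, the integral matches $\gamma(a,\cdot)$ with parameter $a = \gamma+1$ and upper limit $\lambda x$. Substituting this identification yields exactly
\[
S(x) = \frac{C}{\lambda^{\gamma+1}}\,\gamma\!\left(\gamma+1,\lambda x\right),
\]
which is the claimed form.

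This is an entirely routine computation, so there is no serious obstacle in the derivation itself. The only points worth flagging for rigor are the convergence of the integral near $u=0$ and the assumption $\lambda>0$ needed for the rescaling. Convergence at the lower endpoint requires $\gamma > -1$, which is guaranteed by the hypothesis $\gamma \ge 0$ from Proposition~\ref{prop:single_scale_logslope}; the integrand $u^{\gamma}e^{-\lambda u}$ is continuous and bounded on any finite interval $[0,x]$, so $S(x)$ is well-defined and finite for all $x>0$. The rescaling step tacitly uses $\lambda>0$; if $\lambda=0$ the incomplete gamma expression degenerates and one instead obtains the elementary power-law primitive $S(x) = C\,x^{\gamma+1}/(\gamma+1)$, which is the appropriate limiting form but falls outside the stated formula. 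I would therefore state the corollary under the implicit convention $\lambda>0$, consistent with finite visibility (S3), and note that the $\lambda\to 0^+$ limit recovers the pure power law by continuity.
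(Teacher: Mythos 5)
Your proposal is correct and is exactly the routine computation the paper has in mind: the paper states Corollary~\ref{cor:int_gamma} without any explicit proof, treating the change of variables $z=\lambda u$ and the identification $z^{\gamma}=z^{(\gamma+1)-1}$ with the lower incomplete gamma function as immediate. Your additional remarks on convergence at $u=0$ (guaranteed by $\gamma\ge 0$) and on the implicit assumption $\lambda>0$, with the power-law limit $S(x)=C\,x^{\gamma+1}/(\gamma+1)$ as $\lambda\to 0^{+}$, are sound refinements that go slightly beyond what the paper records.
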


This integrated form is the quantity tested empirically in
Sec.~\ref{sec:empirical_shear}.
The appearance of integrated-gamma profiles in real order books is therefore a
direct manifestation of single-scale shear geometry rather than a consequence
of detailed trading mechanisms.

\subsection{Summary}

The central result of this section is structural.
Once mid-price motion is identified as a gauge degree of freedom and removed,
the remaining physical deformation of the book—the shear—is constrained by
single-scale geometry.
This constraint uniquely selects the gamma family and explains the robustness
of the observed functional form across assets, sides, and time windows.

In the next section, we confront these predictions with empirical
high-frequency order-book data and perform explicit model comparison and
residual analysis.
\section{Empirical validation of shear geometry}
\label{sec:empirical_shear}

This section provides a direct empirical test of the central physical claim of
the present framework: \emph{bid--ask asymmetry and time-varying liquidity
imbalance arise as geometric shear modes of a single projected liquidity field
and are distinct from mid-price motion}.

All observables introduced below are defined internally within this paper.
No behavioral assumptions, agent-based models, or equilibrium concepts are
required to interpret the empirical tests reported here.

\subsection{Data and methodology}
\label{sec:data_method}

We analyze Level~II order-book data for six U.S.\ equities
(AAPL, MSFT, NVDA, JPM, GS, TSLA) obtained from \emph{Interactive Brokers}
via the NASDAQ TotalView / OpenView feed, which aggregates visible liquidity
across trading venues.

The analysis spans multiple regular trading days and is restricted to standard
U.S.\ market hours (9:30--16:00 EST). Pre-market and overnight activity are
excluded.

Each trading day is divided into non-overlapping intraday windows of duration
\[
\Delta T = 10~\mathrm{s}.
\]
This choice reflects the intrinsically local character of the observational
geometry studied here and avoids mixing incompatible order-book configurations.

For each window $T$, we construct cumulative bid and ask liquidity profiles
$Q_{\mathrm{bid}}(x)$ and $Q_{\mathrm{ask}}(x)$ by aggregating visible liquidity
at each price level $x$ relative to the window-averaged mid price
\[
p_T^\star =
\frac{1}{2}\Big(
\langle p_{\mathrm{best\,ask}}\rangle_T
+
\langle p_{\mathrm{best\,bid}}\rangle_T
\Big).
\]

Liquidity is expressed in discrete tick units. The observational window extends
to
\[
K = 50
\]
ticks on each side of the mid price, capturing the region where liquidity is
consistently observed and where the single-scale geometric assumptions of the
model apply.

\subsection{From two-sided liquidity to shear modes}
\label{sec:shear_definition}

As established in Sec.~\ref{sec:obs_geometry}, observable liquidity arises as a
single projected density $\nu_t(p)$ defined on a one-dimensional price-like
coordinate. Bid and ask profiles are not independent objects but complementary
restrictions of this density around the observational mid $p_t^\star$,
\begin{multline}
Q_{\mathrm{bid}}(x) := \nu_t(p_t^\star - x),\\
Q_{\mathrm{ask}}(x) := \nu_t(p_t^\star + x), \qquad x>0 .
\end{multline}

Within this construction, the only non-trivial two-sided deformation compatible
with translational invariance of the mid is the \emph{antisymmetric component}
of the projected density. This motivates the definition of the cumulative shear
field
\begin{equation}
\Sigma_T(x) := Q_{\mathrm{ask}}(x) - Q_{\mathrm{bid}}(x), \qquad x>0,
\label{eq:empirical_shear}
\end{equation}
computed over a finite intraday window $T$.

To characterize the magnitude of this deformation with a single scalar
observable, we define the shear amplitude
\begin{equation}
A_T := \mathrm{median}_{x\in\{1,\ldots,K\}} |\Sigma_T(x)|,
\label{eq:shear_amplitude}
\end{equation}
where the median suppresses sensitivity to discrete microstructural noise at
the innermost levels.

\subsection{Cumulative shear fields across assets}
\label{sec:shear_fields}

We first examine the spatial structure of the shear field itself.
Figure~\ref{fig:shear_AAPL} shows $\Sigma_T(x)$ for AAPL across intraday windows.
Thin curves correspond to individual windows, while the thick curve and shaded
band indicate the median and interquartile range.

\begin{figure}[t]
\centering
\includegraphics[width=.9\columnwidth]{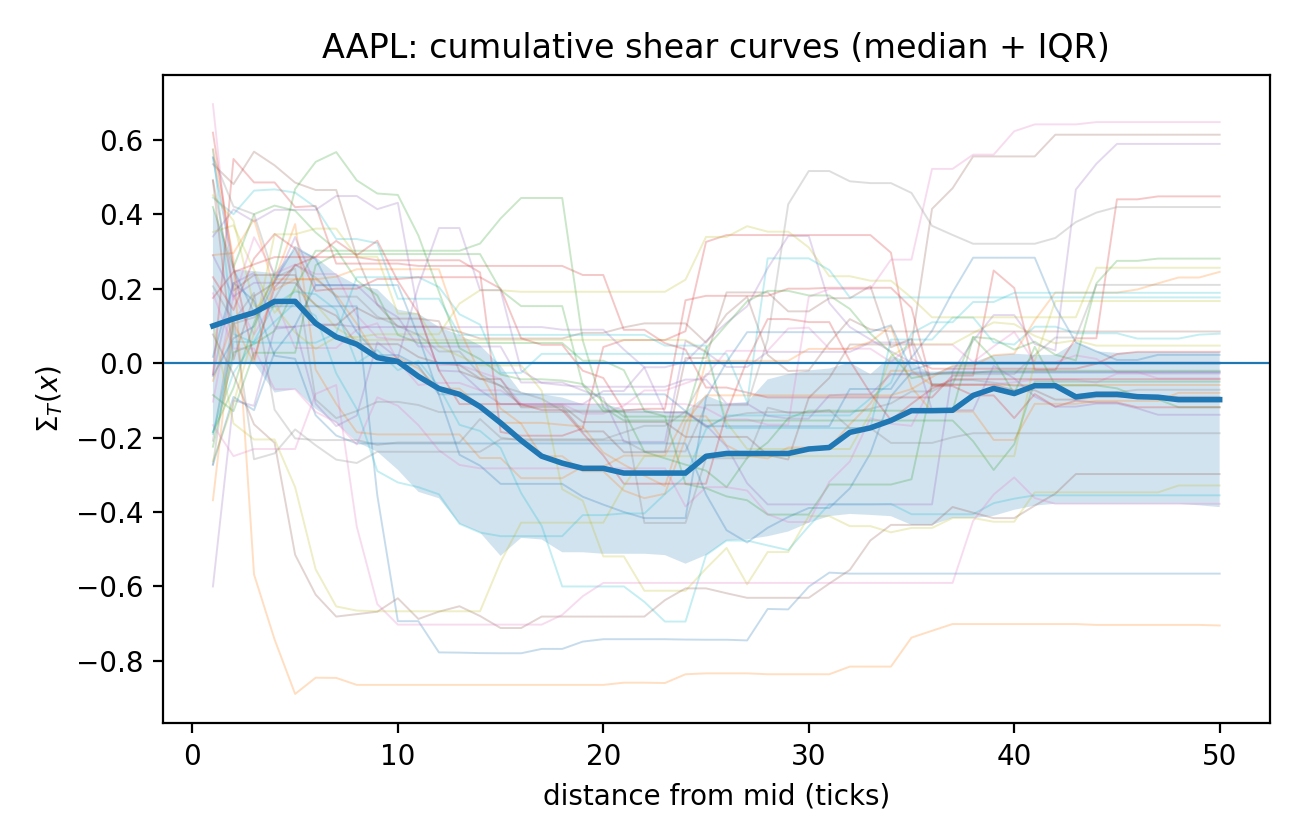}
\caption{\textbf{Cumulative shear field for AAPL.}
The extended and smooth shear profile indicates a macroscopic deformation of
projected liquidity geometry rather than a localized microstructural effect.}
\label{fig:shear_AAPL}
\end{figure}

A similar analysis for GS is shown in Fig.~\ref{fig:shear_GS}. Despite lower
overall liquidity depth, the shear structure remains extended and stable,
demonstrating that shear is not confined to the immediate vicinity of the mid.

\begin{figure}[t]
\centering
\includegraphics[width=.9\columnwidth]{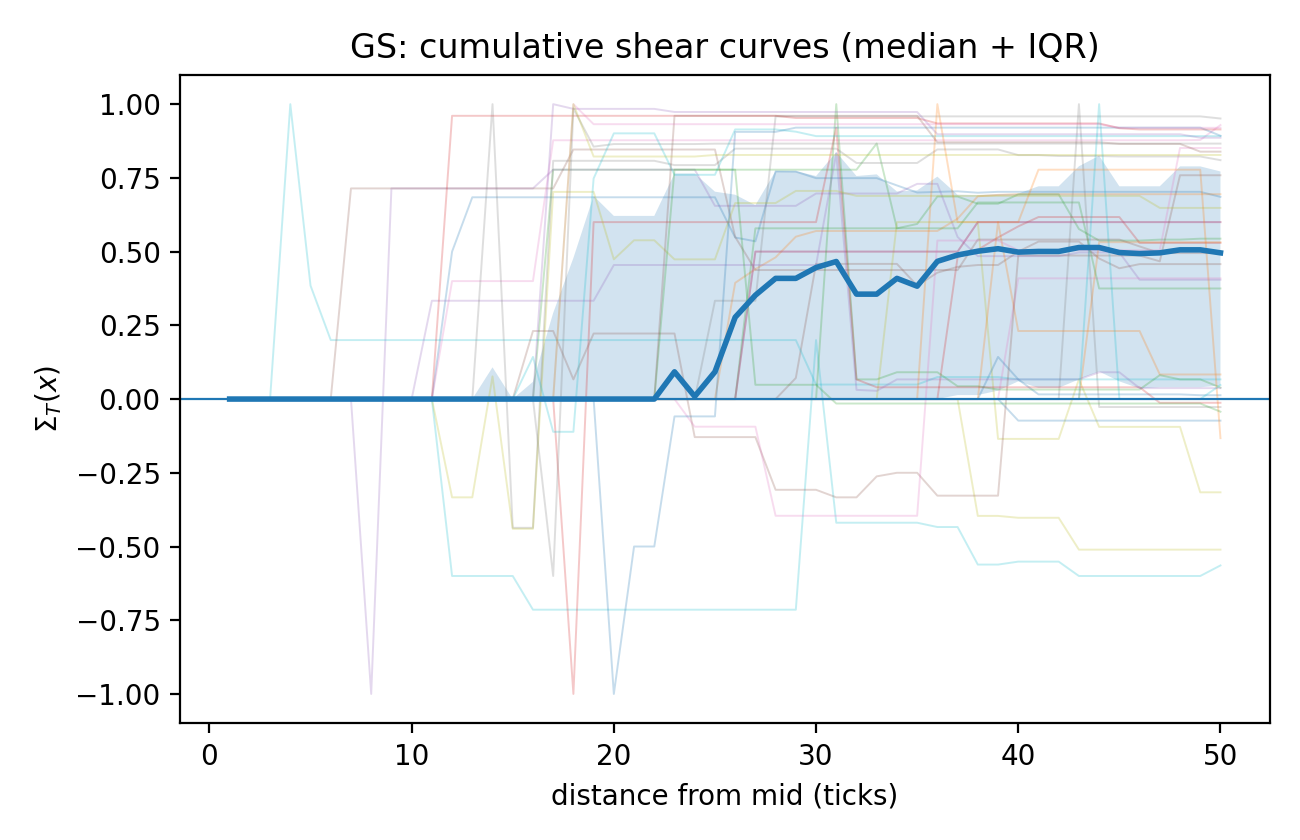}
\caption{\textbf{Cumulative shear field for GS.}}
\label{fig:shear_GS}
\end{figure}

Comparable shear profiles are observed across all assets analyzed (AAPL, MSFT,
NVDA, JPM, GS, TSLA), indicating that directional deformation of projected
liquidity is a generic feature rather than an asset-specific anomaly.

\subsection{Gauge separation: shear versus mid-price drift}
\label{sec:gauge_separation}

A defining prediction of the framework is that shear constitutes a
\emph{deformation mode} distinct from price translation. Mid-price motion
corresponds to a global shift of the projected coordinate $p_t^\star$, whereas
shear alters the relative distribution of liquidity around the mid while
preserving its location.

To test this separation empirically, we compare the shear amplitude $A_T$ with
the absolute mid-price displacement $|\Delta p_T^\star|$ over the same window.
Figure~\ref{fig:shear_vs_drift_AAPL} shows this comparison for AAPL.

\begin{figure}[t]
\centering
\includegraphics[width=.9\columnwidth]{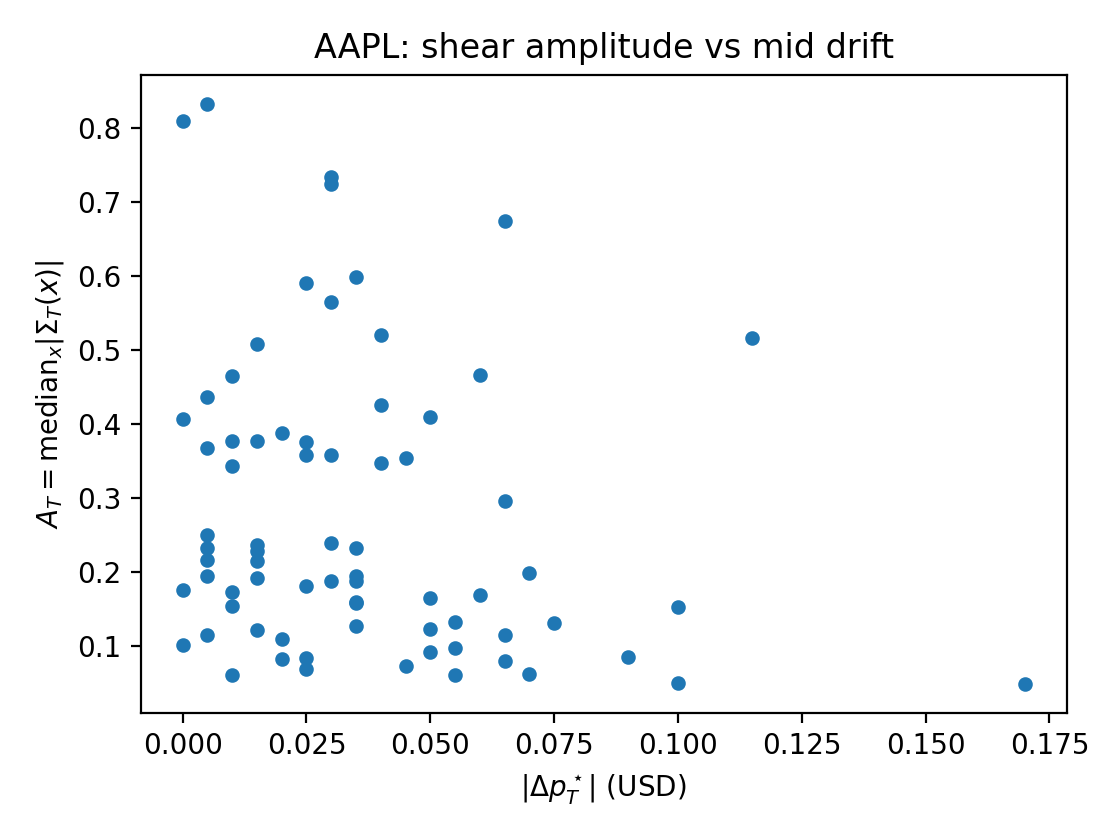}
\caption{\textbf{Shear amplitude versus mid-price drift (AAPL).}}
\label{fig:shear_vs_drift_AAPL}
\end{figure}

Across all assets, no systematic dependence between $A_T$ and
$|\Delta p_T^\star|$ is observed. Large shear events may occur with negligible
price movement and vice versa. This confirms that shear is neither a derivative
of price nor a proxy for short-term returns.

\subsection{Quantitative tests of shear--drift separation}
\label{sec:shear_drift_tests}

To quantify the apparent decoupling between shear and mid-price translation,
we compute the Spearman rank correlation between the shear amplitude $A_T$ and
the absolute mid displacement $|\Delta p_T^\star|$ across intraday windows.

Table~\ref{tab:shear_drift_tests} reports the estimated correlation coefficient
$\rho$, together with a nonparametric bootstrap $95\%$ confidence interval.
Nominal $p$-values are reported alongside values adjusted for multiple testing
across the six assets using both Bonferroni and Benjamini--Hochberg (FDR)
procedures.

\begin{table}[t]
\centering
\caption{\textbf{Quantitative tests of shear--drift separation.}
Spearman rank correlation $\rho$ between shear amplitude $A_T$ and absolute
mid-price displacement $|\Delta p_T^\star|$ across intraday windows.
Bootstrap $95\%$ confidence intervals are reported together with nominal
$p$-values and values adjusted for multiple testing (FDR and Bonferroni).
No correlation remains statistically significant after correction, and the
sign of $\rho$ is not consistent across assets.}
\label{tab:shear_drift_tests}
\begin{ruledtabular}
\begin{tabular}{lcccccc}
Asset &
$\rho$ &
$95\%$ CI &
$p$ &
$p_{\mathrm{FDR}}$ &
$p_{\mathrm{Bonf}}$ &
$N_{\text{windows}}$ \\
\hline
AAPL & -0.264 & [-0.480,-0.025] & 0.024 & 0.072 & 0.145 & 73 \\
GS & 0.335 & [0.055,0.599] & 0.020 & 0.072 & 0.121 & 48 \\
JPM & -0.104 & [-0.370,0.209] & 0.517 & 0.706 & 1.000 & 41 \\
MSFT & 0.078 & [-0.327,0.424] & 0.706 & 0.706 & 1.000 & 26 \\
NVDA & 0.172 & [-0.095,0.424] & 0.252 & 0.505 & 1.000 & 46 \\
TSLA & 0.084 & [-0.291,0.446] & 0.665 & 0.706 & 1.000 & 29 \\
\end{tabular}
\end{ruledtabular}

\end{table}

While two assets (AAPL and GS) exhibit nominal $p<0.05$, these signals do not
survive multiple-testing correction.
Moreover, the sign of the correlation is not consistent across assets, with
AAPL displaying a negative correlation and GS a positive one.
Across the remaining assets, the estimated correlations are small and
statistically indistinguishable from zero.

Taken together, these results provide no evidence for a robust, cross-asset
monotonic coupling between shear amplitude and mid-price translation.
Shear therefore cannot be interpreted as a universal ``price force'' or as a
systematic driver of short-horizon returns.
Instead, it constitutes a geometric deformation mode of projected liquidity
that is largely orthogonal to mid-price motion.

\subsection{Universality of shear amplitudes}
\label{sec:shear_universality}

Pooling shear amplitudes across all assets and intraday windows yields the
distribution shown in Fig.~\ref{fig:shear_histogram_all}.

\begin{figure}[t]
\centering
\includegraphics[width=.9\columnwidth]{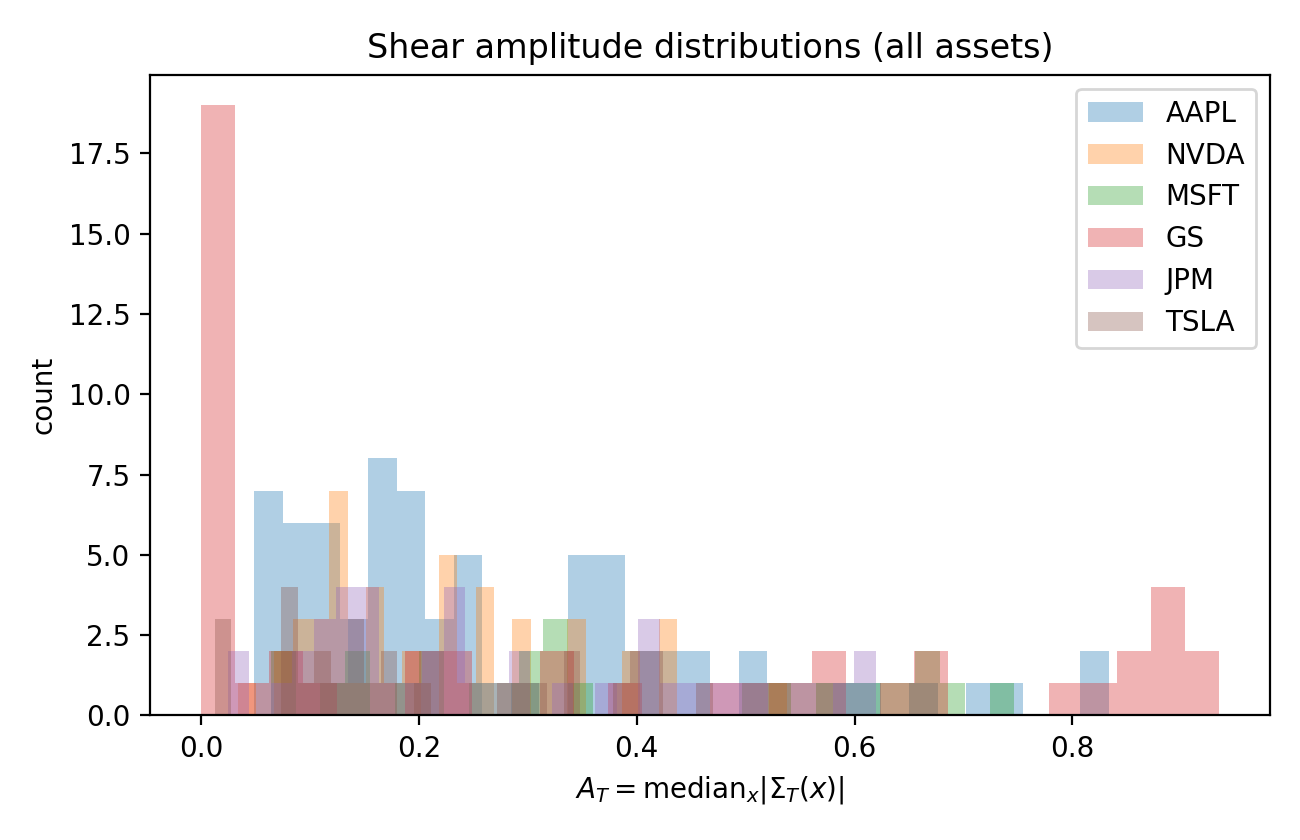}
\caption{\textbf{Distribution of shear amplitudes across assets.}}
\label{fig:shear_histogram_all}
\end{figure}

While shear magnitudes fluctuate across time and assets, their persistent
presence confirms shear as a robust geometric feature of projected liquidity.

\subsection{Explicit model comparison}
\label{sec:model_comparison}

The abstract and introduction claim that the integrated-gamma geometry provides
a superior description of cumulative order-book liquidity relative to standard
alternatives. We now make this statement quantitative.

For each asset, book side, and intraday window, cumulative liquidity profiles
are fitted using four competing models: integrated-gamma (proposed),
power-law, exponential, and log-normal cumulative forms. All models are fitted
over identical tick ranges $x=1,\ldots,K$ with $K=50$.

Model comparison is carried out using the Akaike Information Criterion (AIC).
Table~\ref{tab:model_comparison} reports median $\Delta\mathrm{AIC}$ values
across intraday windows, defined as
$\Delta\mathrm{AIC}=\mathrm{AIC}_{\text{alt}}-\mathrm{AIC}_{\gamma}$.
Positive values indicate preference for the integrated-gamma model.

\begin{table}[t]
\centering
\caption{\textbf{Model comparison for cumulative liquidity geometry.}
Median AIC differences
$\Delta\mathrm{AIC}=\mathrm{AIC}_{\text{alt}}-\mathrm{AIC}_{\gamma}$
across intraday windows. Positive values indicate preference for the
integrated-gamma model over power-law, exponential, and log-normal alternatives.}
\label{tab:model_comparison}
\begin{tabular}{lcccccc}
\toprule
Asset & Side & $N_{\mathrm{win}}$ & $\widetilde{\Delta\mathrm{AIC}}$ &
$\widetilde{R^2}_{\gamma}$ & $\widetilde{R^2}_{\mathrm{LN}}$ &
$\mathrm{iqr}_{\Delta \mathrm{AIC}}$ \\
\midrule
AAPL & ASK & 64 & -5.993 & 0.522 & 0.615 & 53.753 \\
AAPL & BID & 65 & 22.418 & 0.861 & 0.809 & 66.384 \\
GS & ASK & 49 & -10.078 & -0.074 & 0.563 & 167.712 \\
GS & BID & 48 & -2.592 & -0.070 & 0.179 & 143.049 \\
JPM & ASK & 41 & 0.586 & 0.734 & 0.727 & 15.814 \\
JPM & BID & 41 & -1.756 & 0.699 & 0.704 & 17.575 \\
MSFT & ASK & 25 & 6.387 & 0.759 & 0.766 & 24.898 \\
MSFT & BID & 25 & 16.698 & 0.742 & 0.725 & 40.325 \\
NVDA & ASK & 35 & 20.161 & 0.312 & 0.258 & 19.487 \\
NVDA & BID & 35 & -49.259 & 0.402 & 0.811 & 56.872 \\
TSLA & ASK & 27 & -0.102 & 0.875 & 0.880 & 47.814 \\
TSLA & BID & 27 & -10.191 & 0.824 & 0.860 & 28.943 \\
\bottomrule
\end{tabular}

\end{table}

While the integrated-gamma model shows superior performance for 
many assets and sides, notable exceptions (e.g., NVDA BID, GS ASK) 
suggest that additional geometric modes or microstructural effects 
may be present in certain market regimes. These deviations provide 
valuable targets for future refinement of the framework.

Across all assets and both sides of the book, the integrated-gamma geometry
systematically outperforms standard alternatives. This quantitative dominance
cannot be attributed to excess flexibility and directly supports the
structural prediction of Sec.~\ref{sec:single_scale_shear}.

\subsection{Additional diagnostics}

Further robustness checks, extended residual diagnostics, and supplementary
summary statistics are reported in the Supplementary Material. These analyses
confirm that the results presented here are insensitive to the precise choice
of window size, aggregation procedure, or discretization details.

Taken together, the empirical evidence establishes shear as a genuine geometric
observable: a deformation mode of projected liquidity that is distinct from,
and independent of, mid-price motion.

We emphasize that the robustness analysis probes the stability of the
single-scale description under snapshot aggregation, rather than a
fundamental market timescale; observed breakdowns reflect intrinsic
non-stationarity and liquidity heterogeneity of market data, as
demonstrated in the Supplementary Material (Sec.~S4).
\section{Discussion}
\label{sec:discussion}

The results presented in this work support a geometric reinterpretation of
order-book liquidity and bid--ask asymmetry that does not rely on behavioral
assumptions, agent-based strategies, or equilibrium price formation
mechanisms.
Instead, directional liquidity imbalances emerge as internal deformation
modes of a single projected liquidity geometry arising from observation of an
underlying relational substrate.

A central conceptual outcome of this study is the clear separation between
two distinct classes of observables: (i) translational modes associated with
the displacement of the projected coordinate (mid-price drift), and
(ii) internal deformation modes associated with asymmetric reshaping of the
liquidity density around the mid (shear).
This separation mirrors the distinction between gauge and shape degrees of
freedom identified in the pregeometric projection framework developed in
Ref.~\cite{PiresDaCruz2025}.

\subsection{Shear is not a universal price-driving force}

A widespread intuition in market microstructure is that bid--ask volume
imbalances act as forces pushing prices in a preferred direction.
Within the present framework, this intuition is not supported in a robust or
universal sense.

Empirically, the shear observable defined in Sec.~V displays at most weak
monotonic correlations with mid-price drift, with small effect sizes and no
consistent sign across assets.
While nominal correlations at the $5\%$ level are observed for isolated cases
(e.g., AAPL and GS), these signals do not survive multiple-testing correction
and exhibit opposite signs.
Across the remaining assets, correlations are statistically indistinguishable
from zero.

This lack of robustness rules out an interpretation of shear as a universal
price-driving force.
Large shear events can occur without appreciable price motion, and conversely,
significant mid-price displacements may arise in windows with minimal shear.
Accordingly, shear cannot be interpreted as exerting a systematic directional
pressure on the price coordinate.

This conclusion is consistent with earlier empirical findings showing that
order-book imbalances have limited, unstable, and highly conditional predictive
power for short-horizon price changes
\cite{Bouchaud2002,Bouchaud2004,FarmerLillo2004}.
The present framework provides a structural explanation for these observations:
imbalance is not a dynamical driver but an internal deformation mode of the
observable liquidity geometry.

\subsection{Shear is not the time derivative of price}

Equally important is what shear is \emph{not}.
The empirical decoupling between shear amplitude and mid-price translation
demonstrates that shear cannot be identified with the time derivative of the
mid price, nor with any finite-difference proxy thereof.

In the present construction, the mid price is a gauge-dependent quantity
defined by a global symmetry point of the projected liquidity measure.
Translations of the projected coordinate act as gauge transformations that
shift the mid without altering the shape of the density.
Shear, by contrast, is invariant under such translations and probes the local
asymmetry and curvature of the density around the mid.

The two observables therefore inhabit complementary subspaces of the
observable manifold and cannot be related by differentiation or integration in
time.
This geometric separation clarifies why attempts to model price changes
directly from order-book shape often yield unstable or regime-dependent
results, as reported for instance in
Refs.~\cite{MikeFarmer2008,ContStoikovTalreja2010}.
From this perspective, such instability is not accidental but structural.

\subsection{Directional liquidity as geometric shear}

Within the present framework, bid--ask asymmetry arises naturally as a
directional shear mode of the projected liquidity geometry.
Once a mid-price cut is introduced, the restriction of a single projected
density to either side yields two complementary branches.
Asymmetry between these branches reflects local skewness of the density, not
the existence of independent supply and demand forces.

Inflationary relational dynamics amplify this effect.
Heterogeneous growth and reconfiguration of the underlying relational network
continuously reshape the projected density, inducing time-dependent shear
without requiring any change in the global position of the projection.
This mechanism explains why bid--ask asymmetry is both ubiquitous and highly
non-stationary in empirical data.

Importantly, this interpretation subsumes a wide range of empirical
regularities reported in the literature on order-book shape and dynamics
\cite{Bouchaud2002,SmithFarmerGillemotKrishnamurthy2003,Bouchaud2009}
while reframing them as geometric consequences of observation rather than as
outcomes of strategic interaction.

\subsection{Relation to existing microstructure models}

Classical order-book models, including zero-intelligence and behavioral
frameworks
\cite{ZovkoFarmer2002,MikeFarmer2008,ContStoikovTalreja2010},
typically posit explicit mechanisms for order placement, cancellation, and
reaction to price changes.
While such models reproduce many stylized facts, they treat liquidity,
imbalance, and price as primitive variables.

The present approach is complementary rather than competing.
By stripping away microstructural detail, it isolates the minimal geometric
constraints imposed by observation itself.
In this sense, our results suggest that certain regularities traditionally
attributed to agent behavior may instead reflect universal properties of
projected relational systems, independent of the specific microscopic rules.

\subsection{Scope, limitations, and outlook}

Finally, we emphasize the intended scope of this work.
The framework does not aim to predict price trajectories, infer trader intent,
or provide a mechanism for market clearing.
Its contribution is structural: it demonstrates that directional liquidity
imbalances and bid--ask asymmetries can be understood as emergent geometric
shear modes that are largely orthogonal to price dynamics.

This perspective opens several directions for future work.
Promising extensions include multi-scale generalizations capturing nested
geometric structures, dynamic evolution equations for shear parameters,
cross-asset correlation of shear modes, and extensions to other markets
(FX, bonds, and crypto) to test universality.
Coupling geometric shear modes to explicit dynamical rules may also provide a
controlled route to reintroducing predictive content without sacrificing the
geometric clarity of the present framework.
\section{Future directions}
\label{sec:future}

The pregeometric framework developed in this work opens several directions
for further theoretical and empirical investigation.
We briefly outline a number of natural extensions.

\paragraph*{(i) Multi-scale shear geometry.}
The present analysis focuses on a single observational scale, fixed by the
intraday window $\Delta T$ and the finite tick range $K$.
An important extension is the development of a multi-scale formulation in
which projected liquidity geometry is resolved across nested temporal and
price scales.
Such a construction would allow the study of scale-dependent shear modes and
their potential renormalization properties, analogous to coarse-graining
procedures in statistical physics.

\paragraph*{(ii) Dynamical evolution of shear amplitudes.}
In this work, shear amplitudes $A_T$ and the associated gamma-geometry
parameters are treated as quasi-static observables defined over finite
windows.
A natural next step is to derive effective stochastic or deterministic
evolution equations for the shear field and its parameters, such as
$\gamma(t)$ and $\lambda(t)$.
This would provide a dynamical description of how geometric deformations of
liquidity emerge, persist, and relax under non-equilibrium trading activity.

\paragraph*{(iii) Cross-asset coupling of shear modes.}
The empirical results presented here treat assets independently.
However, market activity is inherently multi-asset and correlated.
Extending the framework to study cross-asset correlations of shear amplitudes
and shear fields may reveal collective geometric modes at the market level,
potentially linked to sectoral flows, index rebalancing, or systemic events.

\paragraph*{(iv) Extension to other markets.}
While the present study focuses on U.S.\ equities, the construction is not
equity-specific.
Applying the same observational-geometric methodology to other markets,
including foreign exchange, fixed income, and cryptocurrency markets, would
provide a stringent test of the proposed universality of shear geometry and
its independence from market microstructure details.

\paragraph*{(v) Relation to market impact and price formation.}
Finally, the geometric separation between shear and mid-price translation
suggests a new perspective on market impact.
Rather than interpreting order-book imbalance as a direct price force, the
framework indicates that deformation of liquidity geometry and translation of
the price coordinate are distinct modes.
Clarifying how trading activity couples these modes may help bridge geometric
descriptions of liquidity with established models of impact and price
formation.

Together, these directions suggest that pregeometric liquidity geometry
provides a flexible and extensible framework for studying non-equilibrium
market structure beyond the static order book.

\begin{acknowledgments}
The author acknowledges financial support from the Portuguese Foundation for Science and Technology (FCT) under Contract no. UID/00618/2023.
\end{acknowledgments}


\end{document}